 \patchcmd\Gread@eps{\@inputcheck#1 }{\@inputcheck"#1"\relax}{}{}
\newtheorem{theorem}{Theorem}[section]
\newtheorem{corollary}[theorem]{Corollary}
\definecolor{light-gray}{gray}{0.95}
\def\centerarc[#1](#2)(#3:#4:#5){\draw[#1] ($(#2)+({#5*cos(#3)},{#5*sin(#3)})$) arc (#3:#4:#5);}
\renewcommand{\epsilon}{\varepsilon}
\newcommand{\R}{\mathbb R}
\newcommand{\Z}{\mathbb Z}
\newcommand{\N}{\mathbb N}
\newcommand{\T}{\mathbb T}
\newcommand{\E}{\mathbb E}
\newcommand{\eq}[2]{
\begin{equation}
\label{#1}#2
\end{equation}
}
\newcommand{\eqs}[1]{
\begin{equation*}
#1
\end{equation*}
}
\renewcommand{\tilde}{\widetilde}
\newcommand{\normm}[1]{{\left\vert\kern-0.1ex\left\vert\kern-0.1ex\left\vert\; #1 \; \right\vert\kern-0.1ex\right\vert\kern-0.1ex\right\vert}}    
\newcommand{\cro}[1]{\left[#1\right]}
\newcommand{\pa}[1]{\left(#1\right)}
\newcommand{\var}{\operatorname{Var}}
\renewcommand{\leq}{\leqslant}
\renewcommand{\geq}{\geqslant}
\renewcommand{\ge}{\geqslant}
\title{Scaling relations for the CLG's critical exponents}
\author{Cl\'ement Erignoux, Assaf Shapira and Marielle Simon}
\begin{document}

\maketitle

\begin{abstract}
We consider, in any dimension, the constrained lattice gas introduced by physicists \cite{RPV}, which is an exclusion process on a $d$-dimensional lattice following the additional constraint that only particles with at least one occupied neighbour can jump.  In dimension $d\geq 2$, this model features self-organized criticality at some critical density of particles. Numerical simulations predict the existence of scaling exponents close to criticality, and several relations can be derived between these exponents. The goal of this article is to give a mathematical framework for these relations, which have been numerically established in a companion article \cite{ERSS24}.
\end{abstract}

\begin{center}
\textit{Dedicated to Claudio Landim for his 60th birthday}
\end{center}

\section{Introduction}
The \emph{Constrained Lattice Gas} (CLG) is an interacting particle system  introduced in
\cite{RPV}, describing the evolution of a $d$-dimensional lattice gas with exclusion rule, under the additional constraint that isolated particles are not allowed to move. This model has a dynamical critical point, separating a phase with quick absorption from a phase with sustained activity. With proper boundary condition, this model is an important example of \emph{self-organized criticality}. Since its introduction in \cite{RPV}, this model has suscitated a growing interest, both in the mathematics and physics community. The macroscopic behavior of the empirical particle density has been investigated in several recent papers (sometimes dealing with minor variations of the original model, but without changing its main macroscopic properties): let us cite \cite{AGLS10, BBCS, BESS, BES, DES24, O, GKR, Gold21, Gold2019,  GLS2024, ZC2019} for the one-dimensional ($d=1$) case, and \cite{ERSS24, GLS25, HL, Lubeck, RPV} for the higher $d\ge 2$ case. The one-dimensional system has very peculiar features which makes the model mathematically tractable, and all the convergence results in the above cited papers are rigorously proved. When the dimension is higher, its rigorous treatment is more challenging: numerical simulations suggest that it exhibits a hyperuniform critical state, similarly to several related models featuring an absorbing phase transition. To our knowledge, no theoretical result has been mathematically proved so far.

The purpose of this paper is to present some of the arguments used by physicists in order to study the near critical scaling of the model, in a language more approachable to mathematicians. In particular, we will derive with more details the relationships between critical exponents which have been stated in \cite{ERSS24}. We emphasize that, while targeted at readers from the mathematical community, the results presented are mostly non-rigorous (with the exception of Sections \ref{sec:stationary} and \ref{sec:d1}). For further background and motivation we refer the reader to \cite{ERSS24} and references therein. 

\subsection*{Overview of the paper}In Section \ref{sec:model} we introduce the CLG model, and we make some preliminary observations in dimension $d=2$, then in Section \ref{sec:macro} we define the macroscopic observables and their critical exponents. In Section \ref{sec:scaling} we derive the scaling relations which have been stated in \cite{ERSS24}. In Section \ref{sec:stationary} we add a boundary dynamics and obtain rigorously the expression of the stationary current. Finally, in Section \ref{sec:d1} we review several expressions for the macroscopic observables which have been obtained explicitely in the case $d=1$.

\subsection*{General notations}
 We use $\llbracket$ and $\rrbracket$ as delimiters for integer segments, meaning for example that  given two integers  $a<b$,  $\llbracket a,b \rrbracket =\{a,a+1,\dots,b\}$.

\section{The model}\label{sec:model}

Let $d\ge 1$ and $L\in \N$ be two positive integers, and consider the $d$-dimensional periodic lattice $\T_L:=\llbracket 1,L\rrbracket^d$ with $0\equiv L$. Two sites $i,j \in \T_L$ are called neighbours if $\|i-j\|_1=1$ and we denote it by $i\sim j$. We introduce the space $\Sigma_L:=\{0,1\}^{\T_L}$ which is the space of \emph{particle configurations}: indeed, we denote a configuration by $\eta:=(\eta_i)_{i\in \T_L}$, where $\eta_i=1$ (resp.~$\eta_i=0$) indicates that site $i$ is occupied by a particle (resp.~empty). The \emph{$d$-dimensional conservative lattice gas (CLG)}  is driven by the following Markov generator
\eqs{\mathscr{L} f(\eta):=\sum_{i\in \T_L}\sum_{j\sim i}(1-\eta_j)\eta_i\;{\bf1}_{\big\{\sum_{j'\sim i}\eta_{j'}>0\big\}}\big\{f(\eta^{i,j})-f(\eta)\big\}}
where $\eta^{i,j}$ designates the configuration identical to $\eta$ except that the particle at site $i$ has jumped to $j$.
In other words, a particle at site $i$ jumps at rate $1$ to any empty neighbouring site $j\sim i$,  if and only if another neighbouring site $j'\sim i$ is occupied (this is the \emph{kinetic constraint}). The particles with no neighbouring particle are called \emph{frozen} particles, while  particles which have at least one other neighbouring particle are called \emph{active} particles. See Figure \ref{fig:1} for an illustration of a configuration.

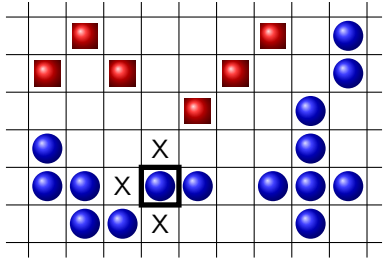
\begin{figure}
	\begin{tikzpicture}
		\draw (0.2,0) -- (5.2,0);
		\draw (0.2,0.5) -- (5.2,0.5);
		\draw (0.2,1) -- (5.2,1);
		\draw (0.2,1.5) -- (5.2,1.5);
		\draw (0.2,2) -- (5.2,2);
		\draw (0.2,2.5) -- (5.2,2.5);
		\draw (0.2,3) -- (5.2,3);
		\draw (0.5,-0.2) -- (0.5,3.2);
		\draw (1,-0.2) -- (1,3.2);
		\draw (1.5,-0.2) -- (1.5,3.2);
		\draw (2,-0.2) -- (2,3.2);
		\shade[ball color=blue](1.25,0.75) circle (0.2);
		\shade[ball color=blue](0.75,0.75) circle (0.2);
		\shade[ball color=blue](0.75,1.25) circle (0.2);
		\shade[ball color=red](1.58,2.08) rectangle (1.92,2.42);
		\shade[ball color=blue](2.75,0.75) circle (0.2);
		\shade[ball color=red](3.08,2.08) rectangle (3.42,2.42);
		\shade[ball color=red](3.58,2.58) rectangle (3.92,2.92);
		\shade[ball color=blue](4.25,1.75) circle (0.2);
		\shade[ball color=red](0.58,2.08) rectangle (0.92,2.42);
		\shade[ball color=blue](1.75,0.25) circle (0.2);
		\shade[ball color=blue](1.25,0.25) circle (0.2);
		\shade[ball color=blue](2.25,0.75) circle (0.2);
		\draw[line width=0.7mm] (2,0.5) rectangle (2.5,1);
		\shade[ball color=red](2.58,1.58) rectangle (2.92,1.92);
		\shade[ball color=red](1.08,2.58) rectangle (1.42,2.92);
		\shade[ball color=blue](4.75,2.75) circle (0.2);
		\shade[ball color=blue](4.75,2.25) circle (0.2);
		\shade[ball color=blue](4.25,0.75) circle (0.2);
		\shade[ball color=blue](4.25,1.25) circle (0.2);
		\shade[ball color=blue](4.25,0.25) circle (0.2);
		\shade[ball color=blue](3.75,0.75) circle (0.2);
		\shade[ball color=blue](4.75,0.75) circle (0.2);
		
		\node at (2.25,1.25) {\textsf{X}};
		\node at (2.25,0.25) {\textsf{X}};
		\node at (1.75,0.75) {\textsf{X}};
		
		
		
		%
		%
		
		\draw (2.5,-0.2) -- (2.5,3.2);
		\draw (3,-0.2) -- (3,3.2);
		\draw (3.5,-0.2) -- (3.5,3.2);
		\draw (4,-0.2) -- (4,3.2);
		\draw (4.5,-0.2) -- (4.5,3.2);
		\draw (5,-0.2) -- (5,3.2);

	\end{tikzpicture}
	\caption{{\color{blue}Blue circle} particles are active, {\color{red}red square} particles are frozen. As an example, the active particle highlighted with $\square$~jumps at rate $1$ to one of its three neighbours indicated with \textsf{X}.}
	\label{fig:1}
\end{figure}

To illustrate the macroscopic behavior of this model, let us now focus on the case $d=2$. Because of the kinetic constraint, the system can \textit{freeze} (\textit{i.e.}~no particle can move) if the number of particles $n$ is less than roughly $L^2/2+\mathcal{O}(L)$ (depending on $L$'s parity). In this case, the CLG ultimately reaches an alternate chessboard-like configuration $\arraycolsep=0.15pt\def\arraystretch{0.05}\begin{array}{ccc}
\bullet & \circ&\bullet\\
\circ&\bullet&\circ\\
\bullet & \circ&\bullet
\end{array}$
(where $\bullet$ and $\circ$ respectively represent occupied and empty sites).

 On the other hand, for $n\gtrsim L^2/2$, no such alternate configuration exists, and the CLG remains active forever. However, as proved numerically in \cite{ERSS24}, there exists an (asymptotic) critical density $\rho_c<1/2$, such that at density $\rho:=n/L^2\in (\rho_c, 1/2)$, starting from some random configuration, activity typically does not vanish over diffusive timescales $ t=\mathcal{O}(L^2)$. Roughly speaking, reaching a frozen configuration for such densities requires overpowering a potential barrier, as many more jumps move the system away from an alternate configuration than towards it. In this regime, the average transience time to reach a frozen state (started from a uniform configuration with $n$ particles, for example) increases exponentially with the size of the system. Over diffusive timescales, we expect that the system locally reaches a \emph{quasi-stationary state} $\pi_\rho$ parametrized by the density $\rho$. In the permanently active phase $\rho>1/2$, we will denote by $\pi_\rho$ the (exact) stationary state at density $\rho$. We expect that this picture holds true for any $d \geq 2$. Note that for $d=1$, it is established that $\rho_c=1/2$ \cite{BESS,BES,DES24}. This special case is mathematically very well understood, due to some pecularities of the one-dimensional dynamics:  in particular the exact stationary state $\pi_\rho$, $\rho>1/2$, is completely explicit, and can be described either by its finite size marginals \cite{BESS}, or by a Markovian construction  \cite{DES24}.

\section{Macroscopic observables} \label{sec:macro}
We now describe the various macroscopic observables pertaining to the CLG that will yield the relations between scaling exponents laid out in \cite{ERSS24} in arbitrary dimension $d$.
\subsection{Active density and activity}
In accordance with the kinetic constraint, we say that a particle is \emph{active} if it has at least one occupied neighbouring site (see also Figure \ref{fig:1}). Set, for any $i\in\T_L$, 
\eq{eq:defAi}{A_i:=\begin{cases}1&\mbox{ if there is an active particle at site }i\\ 0& \mbox{ otherwise,}\end{cases}} and define the number of active particles in the system as 
\eqs{n_a:=\#\{i\in \T_L,\; A_i=1\}.}
We denote by $\rho_a:=n_a/L^d$ the corresponding (empirical) active density. Note that under the translation invariant quasi-stationary state $\pi_\rho$, we can identify $\rho_a(\rho)$ as the quasi-stationary expectation $\E_{\pi_\rho}[{A_i}]$ computed at an arbitrary site $i$. As the density goes down to the critical density $\rho_c$, it is shown numerically (see \cite{ERSS24, RPV} when $d=2$) that the active density vanishes polynomially, and we set 
\eq{eq:scalingrhoa}{\rho_a(\rho)\sim (\rho- \rho_c)^\beta \qquad \text{as }  \rho\to\rho_c.}
Dense regions typically contain a lot of active particles, even though very few jumps are allowed due to the exclusion constraint. For this reason, we also define  the number of edges over which a jump of an active particle can actually occur, towards an empty site,  
\eqs{\mathfrak{a}:=\#\{i\sim j\in \T_L,\; A_i(1-\eta_j)=1\}.}
and denote by $a(\rho):=\mathfrak{a}/L^d\leq (2^d-1)\rho_a(\rho)$ the associated \emph{activity}. Once again, the activity vanishes polynomially \cite{ERSS24} as the density goes down to $\rho_c$, and we set
\eqs{a(\rho)\sim (\rho- \rho_c)^b  \qquad \text{as }  \rho\to\rho_c.}
In \cite{ERSS24} it is shown numerically that $b=\beta$ in the case $d=2$. We conjecture that the same holds in any larger dimension, because at the critical density, most chains of active clusters are likely pairs of neighbouring particles with all other neighbours being empty, resulting in $a(\rho)\simeq (2^d-1)\rho_a(\rho)$ for $\rho\simeq \rho_c$. This is verified numerically in $d=2$.

\subsection{Diffusion coefficient}
In a configuration $\eta$, the  instantaneous current along any edge $(i,j)$ in the system is given by $J_{i,j}(\eta)=A_i-A_j$, which takes the form of a discrete gradient. For this reason, recalling that $\rho_a(\rho) = \E_{\pi_\rho}[A_i]$, we expect that a law of large numbers holds as the lattice size diverges (see for instance \cite{KL}), and that the system under diffusive rescaling obeys the following hydrodynamic equation on the continuous $d$-dimensional torus $\T:=[0,1)^d$
\eq{eq:HDL}{\partial_t \rho=\Delta \rho_a(\rho),} where $\rho:\R_+\times \T \to [0,1]$ is the \emph{macroscopic density} of particles. 
Provided one can prove rigorously the hydrodynamic limit, the function $\rho$ in \eqref{eq:HDL} can be seen as the weak limit of the empirical density, meaning that for any smooth test function $H$ on $\T$,
\eq{eq:conv}{\frac{1}{L^d}\sum_{i\in\T_L}H(i/L) \eta_i(tL^2) \xrightarrow[L\to+\infty]{} \int_{\T} H(x) \rho(t,x) dx,} where the limit usually holds in probability.
The associated macroscopic diffusion coefficient  is therefore given by
\eq{eq:defD}{D(\rho):=\frac{\rm{d}}{\rm{d}\rho}\rho_a(\rho),} 
for which
$\partial_t \rho=\nabla \cdot (D(\rho) \nabla \rho)$. Similarly, close to criticality, we expect 
\eq{eq:defDbis}{D(\rho)\sim (\rho- \rho_c)^\alpha \qquad \text{as } \rho \to \rho_c,}
which together with \eqref{eq:scalingrhoa} yields the first scaling relation \cite[Equation (7)]{ERSS24}, namely
\eq{eq:scaling1}{\alpha=\beta-1.}

\subsection{Correlation lengths}
In the supercritical regime $\rho>\rho_c$, the CLG is characterized by two distinct correlation lengths, which are generically defined as follows:

\begin{itemize} 

\item {\bf The 2-points correlation length $\xi_\times >0$,} classical from a probabilistic standpoint,  is the inverse of the (commonly observed) exponential decay rate of the two-points correlation function given for any $i,j \in \T_L$ by
\eq{eq:2PCF}{\varphi_\rho(i,j):=\E_{\pi_\rho}\big[(\eta_i-\rho)(\eta_j-\rho)\big]\sim \exp\pa{-\frac{\|i-j\|_1}{\xi_\times}} \qquad \text{as }\rho \to \rho_c.}

\item {\bf The geometric correlation length $\xi_\perp>0$} is the most used in the physics literature, see for instance \cite{RPV}. It relates to the spread of activity, meaning that a typical active ``cluster'' (\textit{i.e.}~a group of particles) is characterized by a self-sustained chain of activation among particles of the cluster, and   $\xi_\perp$ is then defined as the typical length-scale of these self-sustained clusters. 
In particular, in a closed system of size $L \ll \xi_\perp$, self-activation becomes impossible, hence activity quickly dies out. However, when $L \gg \xi_\perp$ there is enough space for clusters to self-sustain, making transience time extremely long. 

\end{itemize}

It is generically observed that both  correlation lengths diverge polynomially as the density goes down to $\rho_c$, and as in \cite{ERSS24} we denote by $\nu_\times$ and $\nu_\perp$ the corresponding critical exponents, namely
\eqs{\xi_\times(\rho)\sim (\rho- \rho_c)^{-\nu_{\times}} \qquad   \mbox{ and }  \qquad \xi_\perp(\rho)\sim (\rho- \rho_c)^{-\nu_{\perp}} \qquad \text{as } \rho \to \rho_c.}

\subsection{Compressibility}
For interacting lattice gases, the compressibility is usually defined as the summed two-points correlation in the infinite system, namely
\eq{eq:defChi}{\chi(\rho):=\sum_{i\in\Z^d}\varphi_\rho(0,i),}
where $\varphi_\rho$ was defined in \eqref{eq:2PCF}.
We denote by $\gamma$ the critical exponent for the compressibility
\eq{eq:defchi}{\chi(\rho) \sim (\rho- \rho_c)^{\gamma} \qquad \text{as } \rho \to \rho_c.}

\subsection{Conductivity}
There are several equivalent ways to define the density-dependent conductivity $\sigma(\rho)$ for interacting lattice gases. One such a way which is particularly relevant in our case is through the CLG's macroscopic fluctuation field $u=\rho-\rho_c$, defined for a quasistationary system as the limit in probability:
\eqs{\frac{1}{L^{d/2}}\sum_{i\in\T_L} H(i/L) (\eta_i(tL^2)-\rho_c)\xrightarrow[L\to+\infty]{} \int_\T H(x) u(t,x)dx.}
This fluctuation field is expected to be a solution to 
\eq{eq:fluctuationsHL}{\partial_t u = \nabla \cdot (D(\rho)\nabla u) + \sqrt{2\sigma(\rho)}W,}
where $W$ is a space-time stationary field, which behaves differently according to the space scaling considered: far away from criticality, or when $x\gg \xi_\times$, fluctuations decorrelate and 
$W$ is a standard space-time white noise; close to criticality, and at distances  $x\ll \xi_\times$ instead,  $W$ has non-trivial space correlations of the form
\eq{eq:spacecorr}{\E\big[W(t,x)W(0,0)\big]=\delta(t)|x|^{-\theta}.}
The noise $W$ should then be interpreted as a random distribution, whose formal integral 
\begin{equation}
\label{eq:defW}
W(f):=\int_{\R\times \R^d}f(t,x)W(t,x)dtdx
\end{equation}
is a centered gaussian for any function $f:\R\times \R^d\to \R$, and  \eqref{eq:spacecorr} identifies its covariance as
\begin{equation}
\label{eq:covW}
\E\big[W(f)W(g)\big]=\int_{\R}\cro{\iint_{\R^d\times \R^d}\frac{f(t,x)g(t,y)}{|x-y|^\theta } dx dy } dt.
\end{equation}
For gradient, nearest-neighbour, isotropic exclusion processes like the CLG, the conductivity $\sigma(\rho)$ appearing in \eqref{eq:fluctuationsHL} can be expressed as the average equilibrium rate at which any given edge in the system is crossed in either direction, namely for $0=(0,\dots,0)$ and  $e_1=(1,0,\dots,0)$ 
\[\sigma(\rho)=\E_{\pi_\rho}\big[c_{0,e_1}(\eta)\big],\]
where
\begin{equation}
\label{eq:cij}
c_{i,i'}(\eta):=(1-\eta_{i'})A_i+(1-\eta_i)A_j.
\end{equation}
Note that the conductivity also relates to the macroscopic response of the model to an external field \cite{Spohn}, however we will not give more details on this equivalent interpretation of $ \sigma(\rho)$.

\subsection{Local density fluctuations}

The CLG displays \emph{self-organized criticality}: in other words,  letting a large cluster full of particles spread out in an empty infinite system until it freezes makes the local density inside the cluster converge to $\rho_c$. In dimension $d=2$, it has been observed \cite{ERSS24, GLS25, HL} that at the critical density $\rho_c$, the system ultimately reaches a \emph{hyperuniform} frozen state, meaning  that  the standard deviation of the number of particles in a box of size $R$ in a typical frozen state is of order $R^{\zeta}$ for some $\zeta\leq d/2$. That is,
\begin{equation}
\var(\text{number of particles in }\llbracket R \rrbracket^d)\sim R^{2\zeta}.
\end{equation}

\section{Scaling relations} \label{sec:scaling}

We are now ready to derive the scaling relations which have been stated in \cite{ERSS24}.

\subsection{Compressibility at criticality}
Given a positive integer $k$, we denote   $\xi_k:=k\xi_\times$ and $\xi_\ell:=(k-\sqrt k)\xi_\times$. We introduce two boxes:  $\Lambda_m:=\llbracket-\xi_m, \xi_m\rrbracket^d$, for $m=k,\ell$. Since at distance much larger than $\xi_\times$ the system decorrelates, for large $k$, and $\rho$ close to $\rho_c$, we can write
\begin{align*}
\var_{\pi_\rho}\bigg(\sum_{i\in  \Lambda_k}\eta_i\bigg)&=\sum_{i,j\in \Lambda_k}\varphi_{\rho}(i,j)\\
&=\sum_{i\in \Lambda_{\ell}}\sum_{j\in \Lambda_k}\varphi_{\rho}(i,j)+\sum_{i\in \Lambda_k\setminus \Lambda_\ell}\sum_{j\in \Lambda_k}\varphi_{\rho}(i,j).
\end{align*}
Assuming that  $\sum_{i}\varphi_{\rho}(0,i)$ converges absolutely, and since for large $k$, by definition of the two-points correlation length, $ \chi(\rho) \simeq \sum_{i\in  \Lambda_{\sqrt{k}}}\varphi_\rho(0,i)$, we obtain that for any $i\in \Lambda_\ell$, 
$\sum_{j\in \Lambda_k}\varphi_{\rho}(i,j)\simeq \chi(\rho)$, so that
\eqs{\var_{\pi_\rho}\bigg(\sum_{i\in  \Lambda_k}\eta_i\bigg)=|\Lambda_\ell| \chi(\rho)+\mathcal{O}(|\Lambda_k\setminus\Lambda_\ell|)=|\Lambda_k| \chi(\rho)+\mathcal{O}(k^{3/2}\xi_{\times}^d).}
This yields $\xi_k^{2\zeta}=4\xi_k^d \chi(\rho)+O(\xi_k^d/{k^{d-3/2}} )$. Dividing by $k^d$, for large but fixed $k$, when obtain as $\rho\to\rho_c$, the identity
\eqs{\gamma=\nu_\times(d-2\zeta),}
which is  \cite[Equation (1)]{ERSS24}.


\subsection{Einstein's relation}
Einstein's fluctuation-dissipation relation
\eq{Einstein}{\sigma(\rho)=D(\rho)\chi(\rho)}
relates the two constants characterizing the large scale evolution of the model, and leads to the following relation 
\begin{equation}
\label{eq:EinsteinExponents}
\alpha=b-\gamma,
\end{equation}
which is \cite[Equation (4)]{ERSS24}. To sketch its justification in our setting, we follow similar arguments to those presented in \cite[Section II.2]{Spohn}. We define, in infinite volume, the space-time correlation function in equilibrium 
\eq{eq:2PCFt}{\psi_\rho(t,i):=\E_{\pi_\rho}\big[(\eta_i(t)-\rho)(\eta_0(0)-\rho)\big]}
Note in particular that $\psi_\rho(0,i)=\varphi_\rho(0,i),$ where $\varphi_\rho$ is the two-points correlation function defined in \eqref{eq:2PCF}. As a consequence, the compressibility is simply interpreted as the total mass of $\psi_\rho(0,\cdot)$,
\[\chi(\rho)=\sum_{i\in\Z^d}\psi_\rho(0,i).\]
Because mass is conserved by the dynamics, so is the summed correlation, and because of the decay of correlations, away from criticality, this mass is  initially concentrated on a local neighbourhood of the origin. Furthermore, according to the equation ruling the fluctuation field \eqref{eq:fluctuationsHL}, $\psi_\rho$ should be solution to the discrete heat equation, with diffusion coefficient $D(\rho)$
\[\partial_t \psi_\rho(t,i)=D(\rho) \Delta \psi_\rho(t,i),\] 
where the Laplacian above is the discrete one, namely $\Delta f(i) = \sum_{j\sim i}(f(j)-f(i))$.
In particular, for any time $t$,
\[\sum_{i\in\Z^d} i_1^2\;[\psi_\rho(t,i)-\psi_\rho(0,i)]= t\chi(\rho)D(\rho),\]
because when time-differentiating the right-hand side, and summing by parts over $i_1$, we obtain $D(\rho)\sum_i \psi_\rho(t,i)=D(\rho)\chi(\rho)$. We now need to prove that  
\[\frac{1}{t}\sum_{i\in\Z^d} i_1^2\;[\psi_\rho(t,i)-\psi_\rho(0,i)]=\frac{1}{t}\sum_{i\in\Z^d} i_1^2\;\E_{\pi_\rho}\Big[(\eta_i(t)-\eta_i(0))\eta_0(0)\Big]=\sigma(\rho).\]
By stationarity and translation invariance of $\pi_\rho$, the quantity above is 
\[-\frac1{2t}\sum_{i\in\Z^d} i_1^2\;\E_{\pi_\rho}\Big[(\eta_i(t)-\eta_i(0))(\eta_0(t)-\eta_0(0))\Big].\]
We now write, for any $i$, 
\begin{equation}\label{eq:etai}\eta_i(t)-\eta_i(0)=\int_{0}^t \sum_{i'\sim i}J_{i,i'}(s) ds +\sum_{i'\sim i}M_t^{i,i'}=\int_{0}^t \Delta A_i(s) ds +\sum_{i'\sim i}M_t^{i,i'}\end{equation}
for a family of martingales satisfying 
\begin{equation}
\label{marteq}
\E_{\pi_\rho}\big[(M_t^{i,i'})^2\big]=-\E_{\pi_\rho}\big[M_t^{i,i'}M_t^{i',i}\big]=t\E_{\pi_\rho}\big[c_{i,j}\big]=t\sigma(\rho).
\end{equation}
and $\E[M_t^{i,i'}M_t^{j',j}]=0$ if $\{i,i'\}\neq \{j,j'\}$. Once again, in the identity above, $\Delta$ represents the lattice Laplacian.

\medskip
Now, let us sum \eqref{eq:etai} over $i$, and integrate against $\eta_0(t)-\eta_0(0)$, we see that the contribution of the integral part vanishes. We are left with
\begin{multline*}
-\frac1{2t}\sum_{i\in\Z^d} i_1^2\;\E_{\pi_\rho}\cro{\bigg(\sum_{i'\sim i}M_t^{i,i'}\bigg)\bigg(\int_{0}^t \Delta A_0(s) ds +\sum_{i'\sim 0}M_t^{0,i'}\bigg)} \\
=-\frac1{2t}\sum_{i\in\Z^d} i_1^2\;\E_{\pi_\rho}\cro{\bigg(\sum_{i'\sim i}M_t^{i,i'}\bigg)\pa{\int_{0}^t \Delta A_0(s) ds}}-\frac1{2t}\sum_{i\sim 0}i_1^2\;\E_{\pi_\rho}\cro{M_t^{0,i}M_t^{i,0}},
\end{multline*}
since the cross product for $i=0$ vanishes because of the factor $i_1$, and cross products of martingales vanish as soon as they do not involve the same edge. 
Regarding the second term, 
\[-\frac1{2t}\sum_{i\sim 0}i_1^2\;\E_{\pi_\rho}\cro{M_t^{0,i}M_t^{i,0}}=-\frac1{2t}\E_{\pi_\rho}\cro{M_t^{0,e_1}M_t^{e_1,0}}=\sigma(\rho).\]
Regarding the first term, on the other hand, we define $M_t^i:=\sum_{i'\sim i}M_t^{i,i'}$, and use the translation invariance of the stationary state, to obtain after a summation by parts
\begin{align*}
-\frac1{2t}\sum_{i\in\Z^d} i_1^2\;\E_{\pi_\rho}\cro{\bigg(\sum_{i'\sim i}M_t^{i,i'}\bigg)\pa{\int_{0}^t \Delta A_0(s) ds}}&=-\frac1{2t}\sum_{i\in\Z^d} i_1^2\;\E_{\pi_\rho}\cro{\Delta M_t^i\int_{0}^t A_0(s) ds} \\
&=-\frac1{t}\E_{\pi_\rho}\cro{\int_{0}^t A_0(s) ds \sum_{i\in \Z^d}  M_t^i}\\
&=0,
\end{align*}
since $\sum_i  M_t^i$ vanishes because the total current along an edge and its inverse cancel out.
We finally obtain Einstein's relation \eqref{Einstein}.

\subsection{Hiding the off-criticality}
The geometric correlation length $\xi_{\perp}$ can  be thought of as the scaling starting which we start ``feeling" the off-criticality.

Consider a box of length $L$. The empirical density $\rho$ of particles in this box is never exactly $\rho_c$; let us assume $\rho > \rho_c$. The deviation $\rho - \rho_c$ could be explained in two ways: either the system is truly off-critical, or it is critical but random fluctuations inside our box increased slightly the measured density.
In fact, if $L^d(\rho - \rho_c) \lesssim L^\zeta$ we will be unable to determine which explanation is the correct one.

Imagine now that the system was prepared at an off-critical density $\rho > \rho_c$. Up to the scale $L$ satisfying $L^d(\rho-\rho_c) \sim L^{\zeta }$, an observer might (mistakenly) think that the system is critical, so, for example, if we close the boundaries of the box the transience time will be short. At scales above $L$, however, the observer knows that we are in a supercritical phase, and in particular the transience time is very long.

We thus see that the geometric length scale $\xi_\perp$ can be identified with $L$ which satisfies $L^d(\rho-\rho_c) \sim L^{\zeta }$, yielding the relation
\begin{equation}
\nu_\perp (d-\zeta)=1, \label{eq:nuperp_zeta}
\end{equation} 
which is  \cite[Equation (3)]{ERSS24}.

\subsection{Scaling invariance and dynamical exponent}

We denote $u=\rho-\rho_c$, and identify functions of $\rho$ as functions of $u$ (\emph{e.g.}~$D(\rho)$ as $D(u)$). Close to criticality we have $D(u)\sim u^\alpha = u^{\beta-1}$ and $\sigma(u)\sim u^{\gamma+\beta-1}$.
In the regime $x\ll \xi_\times$, where spatially correlated noise affects the fluctuation field,  we define the rescaled fluctuation field, with scaling parameter $\ell >0$, as
\[ \tilde u(x,t)=\ell^{\zeta-d} u(t/\ell^z,x/\ell).\]   
We look for a relationship between exponents which make equation \eqref{eq:fluctuationsHL} invariant by this rescaling. Recall that $W$ in this regime has space correlations given by \eqref{eq:spacecorr}.
Easy computations give
\begin{align*}
\partial_t \tilde u(t,x)&=\ell^{\zeta-d-z} \partial_t u(t/\ell^z,x/\ell)\\
\nabla \tilde u(t,x) & = \ell^{\zeta-d-1} \nabla u (t/\ell^z,x/\ell)\\
D(\tilde u) & \sim \ell^{(\zeta-d)(\beta-1)} (D(u))(t/\ell^z,x/\ell)\\
\nabla\cdot \big(D(\tilde u)\nabla \tilde u\big) &= \ell^{(\zeta-d)\beta-2} \big(\nabla\cdot  \big(D(u)\nabla u\big)\big)(t/\ell^z,x/\ell)\\
\sqrt{\sigma(\tilde u)} & = \ell^{\frac {(\zeta-d)\gamma} 2 + \frac {(\zeta-d)} 2(\beta-1)} \sqrt{\sigma(u)}(t/\ell^z,x/\ell)
\end{align*}
Furthermore, if we define for some constant $a$ (to be chosen later), 
\[\tilde W(t,x)dt dx = \ell^{a} W(t/\ell^z,x/\ell)dt dx,\]
 then equation \eqref{eq:defW} rewrites as 
\[\tilde W (f)=W(\tilde f) \qquad \text{for } \tilde f(t,x)=\ell^{a+d+z}f(\ell^zt, \ell x)\]
and \eqref{eq:covW} yields
\[\E\big[\tilde W (f) \tilde W (g)\big]=\E\big[W(\tilde f)W(\tilde g)\big]=\ell^{2a+z+\theta}\E[W(f)W(g)],\]
so that in order to enforce scale invariance of the noise, we choose $ a=-(z+\theta)/2$.

From the above we can now write
\[\nabla\big(\sqrt{\sigma(\tilde u)}\tilde W\big) = \ell^{\frac{(\zeta-d)(\beta+\gamma-1)}{2}-\frac{z+\theta}{2}-1} \nabla\big(\sqrt{\sigma(u)}W\big)(x/\ell,t/\ell^z).\]
Therefore, scale invariance of equation \eqref{eq:fluctuationsHL} is ensured when
\[ \zeta-d-z=(\zeta-d)\beta-2=\frac{(\zeta-d)(\beta+\gamma-1)}{2}-\frac{z+\theta}{2}-1.\]
which is solved by 
\[ z=(\zeta-d)(1-\beta)+2, \quad \text{and} \quad \theta=d-\zeta,\]
when $\gamma =1$ (which corresponds to the case $\beta=b$, see \eqref{eq:scaling1} and \eqref{eq:EinsteinExponents}).
This yields  \cite[Equation (11)]{ERSS24}. 

\medskip

Note that  in the scale $x\gg \xi_\times$, the invariance principle is satisfied with the exponents $z=2,\zeta=\frac{d}{2},\gamma=0,\beta=1$, taking $\theta=d$ giving same scaling as white noise (and is the smallest value of $\theta$ for which the covariance is integrable over space).

\section{Boundary-driven CLG and stationary current} \label{sec:stationary}
In the non-periodic setting, in the presence of reservoirs, the system reaches a stationary state (in general non-reversible), which we assume to locally resemble the quasi-stationary state of the closed system. We consider here the non-periodic box $\Lambda_L=\llbracket 1,L\rrbracket^d$, whose boundary 
\[\partial \Lambda_L:=\{i\in \Lambda_L \; ;\;  \exists j\notin \Lambda_L:\; j\sim i\}\]
can be equipped with several choices of non-conservative dynamics. In \cite{DES24}, reservoirs mimicking contact with infinite FEP outside $\Lambda_L$ are considered. We, however, choose here simpler boundary dynamics which resamples a Benoulli random variable at each boundary point, corresponding to an unconstrained heat bath outside $\Lambda_L$.
This dynamics is driven by the generator
\begin{align}
&\mathscr{L} = \mathscr{L}_{bulk} + \mathscr{L}_{boundary}, \vphantom{\bigg(}\\
\mathscr{L}_{bulk} f(\eta):=\sum_{i\in \Lambda_L}&\sum_{j\in \Lambda_L,\, j\sim i}A_i(1-\eta_j)\big\{f(\eta^{i,j})-f(\eta)\big\}\label{eq:boundary}\\
\mathscr{L}_{boundary} f(\eta) := \sum_{i\in \partial \Lambda_L}& \left( \alpha(i) f(\eta^{i \leftarrow 1}) +(1-\alpha(i)) f(\eta^{i \leftarrow 0})- f(\eta)\right),
\end{align}
where $A_i=\eta_i\;{\bf1}_{\{\sum_{j'\sim i}\eta_{j'}>0\}}$ defined in \eqref{eq:defAi} indicates whether there is an active particle at site $i$, and $\alpha:\partial\Lambda_L \to (0,1)$ represents the boundary active density.
In order for the dynamics to be ergodic, we set boundary particles to be always active, i.e.~$\eta\equiv 1$ in $\Lambda_L^{\mathrm{c}}$.
With this choice of boundary dynamics, due to the \emph{gradient property} of the model (see \cite{KL} for instance), the density of active particles in stationarity solves a Dirichlet problem.

\medskip

In order to formulate the result, we introduce a  \emph{mirror boundary}, \[\partial^*\Lambda_L=\{i\in \Z^d \; ; \; d(i,\Lambda_L)=1\}\] and define on this mirror boundary $\rho_a(i^*) = \alpha(i)$ for $i^*\in\partial^*\Lambda_L$.
In addition, we denote the unique stationary measure by ${\pi}_\alpha$, and define similarly as before, for $i\in  \Lambda_L$, 
\[\rho_a(i) = \E_{\pi_\alpha}[A_i].\]

\begin{theorem} The function
 $\rho_a$ solves the Dirichlet problem:
\[\begin{cases}
\sum_{j \sim i} (\rho_a(j)-\rho_a(i)) = 0, \quad  \forall i\in  \Lambda_L\\
\rho_a(i^*) = \alpha(i),  \quad \forall i^*\in \partial^*\Lambda_L,
\end{cases}
\]
 where the sum in the first line is taken over $j\in \Lambda_L\cup \partial^*\Lambda_L$ neighbouring $i$.

Moreover, the expected current over an edge $i\sim j$ is given by $\rho_a(j)-\rho_a(i)$, and when $j=i^*$ this current should be seen as the rate of particles entering from the reservoir via the boundary site $i$ (with exiting particles counted with a negative sign).
\end{theorem}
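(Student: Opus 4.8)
The plan is to exploit stationarity in its sharpest form: if $\pi_\alpha$ denotes the (unique) invariant measure, then $\E_{\pi_\alpha}[\mathscr{L} f]=0$ for every observable $f$, and I would apply this to the single-site occupations $f=\eta_i$. Before doing so I would settle existence and uniqueness of $\pi_\alpha$. Since the state space $\{0,1\}^{\Lambda_L}$ is finite, it suffices to check irreducibility, and here the convention $\eta\equiv 1$ on $\Lambda_L^{\mathrm c}$ is precisely what makes the chain ergodic: every particle injected at a boundary site automatically has an occupied exterior neighbour, hence is active and can be driven into the bulk, so combining boundary resamplings with such inward moves lets one connect any two configurations. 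Uniqueness of $\pi_\alpha$ follows.

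The heart of the argument is the \emph{gradient property}. Writing out $\mathscr{L}_{bulk}\eta_i$ by tracking which jumps change $\eta_i$ gives, for $i\in\Lambda_L$,
\[\mathscr{L}_{bulk}\eta_i=\sum_{\substack{j\sim i\\ j\in\Lambda_L}}\big[A_j(1-\eta_i)-A_i(1-\eta_j)\big].\]
The key observation is the algebraic identity $A_i\eta_j=A_j\eta_i=\eta_i\eta_j$, valid for neighbours $i\sim j$ (if $\eta_j=1$ then $i$ has an occupied neighbour, so $A_i=\eta_i$; both sides vanish when $\eta_j=0$). It collapses each summand to the discrete gradient $A_j-A_i$, so that $\mathscr{L}_{bulk}\eta_i=\sum_{j\sim i,\,j\in\Lambda_L}(A_j-A_i)$ is the lattice Laplacian of the activity field. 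This is exactly what makes the equation for $\rho_a$ close: taking $\E_{\pi_\alpha}$ and using linearity alone, with no higher correlations needed, turns it into the discrete Laplacian of the field $i\mapsto\rho_a(i)=\E_{\pi_\alpha}[A_i]$, giving $\sum_{j\sim i}(\rho_a(j)-\rho_a(i))=0$ at every interior site.

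It remains to handle boundary sites and to match the reservoir terms with the mirror values. For $i\in\partial\Lambda_L$ one again has $A_i=\eta_i$, since the always-occupied exterior guarantees an occupied neighbour, whence $\rho_a(i)=\E_{\pi_\alpha}[\eta_i]$; meanwhile $\E_{\pi_\alpha}[\mathscr{L}_{boundary}\eta_i]=\alpha(i)-\E_{\pi_\alpha}[\eta_i]=\alpha(i)-\rho_a(i)$. Adding this to the bulk contribution and imposing $\E_{\pi_\alpha}[\mathscr{L}\eta_i]=0$ produces exactly the Dirichlet equation at $i$, with the reservoir term $\alpha(i)-\rho_a(i)$ playing the role of a mirror neighbour $i^*$ carrying value $\rho_a(i^*)=\alpha(i)$. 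I expect this boundary bookkeeping to be the main obstacle: one must verify $A_i=\eta_i$ on $\partial\Lambda_L$ and, above all, reconcile the single resampling attached to each boundary site with the possibly several mirror neighbours $i^*\in\partial^*\Lambda_L$ at corners, i.e.\ fix the precise convention under which the reservoir rate is distributed over the external edges so that the two sides agree term by term. Once the equation and its boundary data are in place, the discrete maximum principle yields uniqueness of the Dirichlet solution and pins down $\rho_a$.

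Finally, the current identity is a direct corollary of the same gradient property. The net instantaneous current across an internal edge $(i,j)$ is the difference of the two jump rates, $A_i(1-\eta_j)-A_j(1-\eta_i)=A_i-A_j$, whose $\pi_\alpha$-expectation is $\rho_a(i)-\rho_a(j)$; across a boundary edge the analogous quantity is the net injection rate $\E_{\pi_\alpha}[\mathscr{L}_{boundary}\eta_i]=\alpha(i)-\rho_a(i)=\rho_a(i^*)-\rho_a(i)$. In both cases the expected current is the discrete gradient $\rho_a(j)-\rho_a(i)$ up to orientation, as claimed.
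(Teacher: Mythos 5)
Your proof is correct and follows essentially the same route as the paper: apply stationarity in the form $\E_{\pi_\alpha}[\mathscr{L}\eta_i]=0$, use the gradient property to write $\mathscr{L}_{bulk}\eta_i$ as the discrete Laplacian of the activity field, and identify the boundary generator's contribution $\alpha(i)-\E_{\pi_\alpha}[\eta_i]$ with the mirror term $\rho_a(i^*)-\rho_a(i)$. You in fact supply more detail than the paper does (the identity $A_i\eta_j=A_j\eta_i=\eta_i\eta_j$ behind the gradient property, the check that $A_i=\eta_i$ on $\partial\Lambda_L$, and the corner-site bookkeeping for sites with several mirror neighbours, which the paper glosses over).
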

\begin{proof}
Fix $i\in\Lambda_L$. By stationarity,  $ \pi_\alpha(\mathscr{L} \eta(i) )=0$.
In the bulk,
\[
\mathscr{L}_{bulk} \eta(i) = \sum_{\Lambda_L \ni j\sim i} (A_j - A_i).
\]
If $i\in \partial \Lambda_L$,
\[
\mathscr{L}_{boundary} \eta(i) = \rho(i^*) - A_i = \sum_{\partial^*\Lambda_L \ni j \sim i} (\rho(j)-A_i).
\]
Taking expectation with respect to $\pi_\alpha$ and summing both contributions proves the theorem.
\end{proof}

As a straightforward consequence, we have the following results.
\begin{corollary}
Consider a cylindrical geometry in two dimensions, $\Lambda = [1,L]\times \mathbb{Z}/{L\mathbb{Z}}$, with boundary $\partial \Lambda = \{1,L\}\times \mathbb{Z}/{L\mathbb{Z}}$. We set the left reservoir to  $\alpha_\ell$ and the right reservoir to $\alpha_r$, i.e.,
\begin{align*}
\alpha(i) &= \alpha_\ell \quad  \text{if } i \in \{1\}\times \mathbb{Z}/{L\mathbb{Z}},\\
\alpha(i) &= \alpha_r  \quad \text{if } i \in \{L\}\times \mathbb{Z}/{L\mathbb{Z}}.\\
\end{align*}
Then we have
\[
\rho_a (i_1,i_2) = \alpha_\ell + (\alpha_r-\alpha_\ell)\frac{i_1}{L+1}.
\]
\end{corollary}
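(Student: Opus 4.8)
The plan is to reduce the corollary to the Dirichlet problem established in the theorem and then solve that problem explicitly by exhibiting the obvious affine candidate. First I would pin down the geometry: since the cylinder $\Lambda$ is periodic in the $i_2$-direction, the only genuine boundary lies in the $i_1$-direction, so the mirror boundary $\partial^*\Lambda$ consists exactly of the two columns $\{0\}\times\mathbb{Z}/L\mathbb{Z}$ and $\{L+1\}\times\mathbb{Z}/L\mathbb{Z}$, on which the prescribed values are $\rho_a(0,i_2)=\alpha_\ell$ and $\rho_a(L+1,i_2)=\alpha_r$. By the theorem, $\rho_a$ is \emph{the} solution of the discrete Dirichlet problem on this graph, so it suffices to produce one solution and invoke uniqueness.

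Next I would verify that the proposed function $f(i_1,i_2):=\alpha_\ell+(\alpha_r-\alpha_\ell)\frac{i_1}{L+1}$ solves the Dirichlet problem. It matches the boundary data, since $f(0,\cdot)=\alpha_\ell$ and $f(L+1,\cdot)=\alpha_r$. For harmonicity in the bulk, note that $f$ is constant in $i_2$, so the two vertical neighbours cancel in $\sum_{j\sim i}(f(j)-f(i))$, while $f$ is affine in $i_1$, so the two horizontal neighbours contribute $f(i_1+1,i_2)+f(i_1-1,i_2)-2f(i_1,i_2)=0$. The one point deserving care is that this computation remains valid at the layers $i_1=1$ and $i_1=L$ adjacent to the boundary: there the relevant horizontal neighbour lies in $\partial^*\Lambda$, but because the affine formula evaluated at $i_1=0$ and $i_1=L+1$ coincides with the prescribed mirror values $\alpha_\ell,\alpha_r$, the second difference still vanishes. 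Hence $f$ is a genuine solution.

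Finally I would conclude by uniqueness. The discrete Dirichlet problem on a finite connected graph with non-empty boundary has a unique solution: the discrete maximum principle forces any harmonic function to attain its extrema on the mirror boundary, so the difference of two solutions sharing the same boundary data is harmonic with zero boundary values and therefore vanishes identically. Applying this to $f$ and $\rho_a$ gives $\rho_a=f$, which is the claimed formula.

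I do not expect any serious obstacle here, the statement being a direct specialization of the theorem; the only genuinely non-automatic point is the boundary bookkeeping described above, namely checking that the affine interpolation passing through the mirror values is harmonic right up to the edge. This is precisely what makes the linear profile the correct ansatz, and explains the denominator $L+1$: the boundary data are imposed at $i_1=0$ and $i_1=L+1$ rather than at $i_1=1$ and $i_1=L$.
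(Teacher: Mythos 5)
Your argument is correct and is exactly the intended one: the paper states this corollary without proof as a ``straightforward consequence'' of the theorem, and your verification that the affine profile through the mirror-boundary values $\alpha_\ell$ at $i_1=0$ and $\alpha_r$ at $i_1=L+1$ is discretely harmonic, combined with uniqueness from the maximum principle, is precisely the omitted computation. Your observation about the denominator $L+1$ coming from the mirror boundary placement is the one point worth making explicit, and you made it.
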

\begin{corollary}
Consider a cylindrical geometry in two dimensions as previously, and let $J_t$ be the total flow through $\Lambda$ up to time $t$. This current can be seen as the total number of particles entering at the left boundary (where a particle leaving is counted with a negative sign); up to accumulation of particles this is the same as counting particles exiting at the right boundary.
Then the expected flow starting from stationarity is
\[
\mathbb{E}_{\mu_\alpha} [J_t] = K  (\alpha_r-\alpha_l) \, t,
\]
for $K=\frac{L}{L+1}$. This shows \cite[Equation (13)]{ERSS24}.
\end{corollary}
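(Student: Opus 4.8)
The plan is to reduce the computation entirely to the two results already established: the gradient/Dirichlet characterization of the Theorem, which identifies the expected stationary current across the boundary edge at a site $i$ with $\alpha(i)-\rho_a(i)$, and the explicit linear profile $\rho_a(i_1,i_2)=\alpha_\ell+(\alpha_r-\alpha_\ell)\frac{i_1}{L+1}$ from the previous corollary. The only genuinely new ingredient is to pass from the \emph{instantaneous} expected current to the \emph{cumulative} expected flow $\E_{\mu_\alpha}[J_t]$, which I will argue is simply linear in $t$ by stationarity.

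First I would make $J_t$ precise as the signed net number of particles that have entered across the left boundary up to time $t$, i.e.\ a compensated counting process associated with the part of $\mathscr{L}_{boundary}$ acting on the left column. Writing $J_t$ as the time integral of its instantaneous rate plus a mean-zero martingale, and using that $\mu_\alpha$ is stationary, the expected rate is constant in time, so $\E_{\mu_\alpha}[J_t]=r\,t$ where $r$ is the stationary expected injection rate. By the Theorem, the per-site injection rate at a left boundary site $(1,i_2)$ equals $\alpha_\ell-\rho_a(1,i_2)$, so summing over the $L$ transverse sites $i_2\in\Z/L\Z$ gives $r=\sum_{i_2}(\alpha_\ell-\rho_a(1,i_2))$. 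Mass conservation in the bulk guarantees that this boundary flux equals the net current across any vertical cross-section, which in particular justifies the stated equivalence with counting exits at the right boundary.

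It then remains to insert the linear profile: at the left boundary $\rho_a(1,i_2)=\alpha_\ell+(\alpha_r-\alpha_\ell)\frac{1}{L+1}$, so each transverse site contributes $(\alpha_r-\alpha_\ell)\frac{1}{L+1}$ in magnitude (with the sign fixed by the chosen orientation of $J_t$), and summing the $L$ identical contributions produces the prefactor $K=\frac{L}{L+1}$, yielding $\E_{\mu_\alpha}[J_t]=K(\alpha_r-\alpha_\ell)t$. The main obstacle is not the algebra, which is immediate, but the clean justification of the linearity $\E_{\mu_\alpha}[J_t]=r\,t$: one must verify that $J_t$ is the integral of a current whose compensator has constant expectation under the stationary measure, so that the martingale part drops out in expectation. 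Everything else follows by plugging the profile from the previous corollary into the boundary-current formula of the Theorem.
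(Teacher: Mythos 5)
Your argument is correct and is precisely the ``straightforward consequence'' the paper has in mind: it gives no explicit proof, and the intended route is exactly yours --- linearity in $t$ via the martingale/compensator decomposition and stationarity, the boundary injection rate $\alpha(i)-\rho_a(i)$ supplied by the Theorem, and the linear profile from the preceding corollary, yielding $L$ identical transverse contributions of $(\alpha_r-\alpha_\ell)/(L+1)$ and hence $K=\frac{L}{L+1}$. The only caveat is the sign you already flag yourself: with $J_t$ literally counting particles entering at the left, the per-site rate is $\alpha_\ell-\rho_a(1,i_2)=-(\alpha_r-\alpha_\ell)/(L+1)$, so the displayed formula implicitly fixes the orientation of the flow; this is an ambiguity in the statement of the corollary, not a gap in your proof.
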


\section{Critical exponents in $d=1$} \label{sec:d1}

The one-dimensional case is unlike the higher-dimensional one, because in $ d=1$ the deterministic threshold $\rho=1/2$ matches with the critical density $\rho_c$: because of the much simpler structure of frozen configurations in one dimension (any alternate configuration is frozen), there is no potential barrier to overcome to reach a frozen state at densities $\rho\leq 1/2$. Furthermore, in one dimension, the grand canonical states are explicit, so that all quantities above have explicit expressions as functions of $\rho$. This section is therefore dedicated to fully describing the one-dimensional case.

\medskip

First, we describe the infinite volume grand-canonical states in one dimension. We say that a configuration is ergodic on $\Lambda\subset \Z$  if for any edge $(i,i+1)\subset \Lambda$, $\eta_i+\eta_{i+1}\geq 1$. For $\rho\geq 1/2$ the stationary measures are given as the translation-invariant measures $\pi_\rho$ on $\{0,1\}^\Z$ whose marginal on $B_\ell:=\llbracket 1,\ell \rrbracket$ is given for any $\sigma\in \{0,1\}^{B_\ell}$ by 
\[\pi_\rho(\eta_{|B_\ell}=\sigma)=(1-\rho)\rho_a^{2p-\ell+1-\sigma_1-\sigma_\ell} (1-\rho_a)^{\ell-1-p}{\bf1}_{\{\sigma \mbox{ is ergodic on } B_\ell\}},\]
where $p=\sum_{i=1}^\ell \sigma_i$ is the number of particles in $\sigma$, and the active density is related to the density by
\[\rho_a(\rho)=\frac{2\rho-1}{\rho}.\]
Equivalently to the previous formula, $\pi_\rho$ can be be seen as the distribution of a Markov chain on $\{0,1\}$, with initial $\eta_1\sim \mathrm{Ber}(\rho)$, and with transitions
\[\pi_\rho(\eta_{i+1}=1\mid \eta_i=1)=\rho_a, \qquad \mbox{ and } \qquad \pi_\rho(\eta_{i+1}=1\mid \eta_i=0)=1.\]

\medskip

After straightforward computations, these explicit constructions of the grand-canonical states yield the following explicit expressions for the macroscopic observables described above, namely;
\[\begin{array}{cc}
\medskip
\mathfrak{a}(\rho)=2\rho_a(1-\rho)<\rho_a &\qquad \mbox{(Activity)}\\
\medskip
D(\rho)=\rho_a'=\frac{1}{\rho^2} &\qquad \mbox{(Diffusion coefficient, \cite{EZ2023})}
\\
\medskip
\varphi_\rho(0,i)=\rho(1-\rho)(\rho_a-1)^i &\qquad \mbox{(Two-points correlation function,  \cite[Appendix A]{EZ2023})}\\
\medskip
\chi(\rho)=\rho(1-\rho)(2\rho-1)& \qquad \mbox{(Compressibility, \cite{EZ2023})}\\
\medskip
\sigma(\rho)=\frac{(1-\rho)(2\rho-1)}{\rho}& \qquad \mbox{(Conductivity, \cite{EZ2023})}\\
\medskip
\xi_\times(\rho)=-\log(1-\rho_a)^{-1}\underset{\rho\to 1/2}{\sim} 1/\rho_a& \qquad \mbox{(Two-points correlation length)}\\
\medskip
\xi_\perp(\rho)= 1/\rho_a& \qquad \mbox{(Geometric correlation length)}
\end{array}\]
In one dimension, under $\pi_\rho$, the  probability that a jump can occur over an edge in a fixed direction is $\rho_a(\rho)(1-\rho)$, summing over the two possible directions yields $\mathfrak{a}(\rho)$. 
Furthermore, the two-points correlation length $\xi_\times(\rho)$ is immediately obtained from the two-points correlation function $\varphi_\rho $ as its inverse exponential decay rate.
The geometric correlation length is the only one that is not explicitly computed through the grand canonical states.  However it can be straightforwardly obtained because of the lack of fluctuations of the critical state for the one-dimensional state: since the latter must be an alternate configuration, the off-criticality cannot be hidden, and the standard variation of the number of particles in a critical box vanishes, so that $\zeta=0$. In particular, over length scales larger than $1/\rho_a$, an active particle has typically been found in the grand-canonical state, and the off-criticality is evidenced. Since close to criticality $\rho_a\sim \rho-1/2$, this yields the critical exponents 
\[b=\beta=\gamma=\nu_\times=\nu_\perp=1.\]
Finally, since the diffusion coefficient converges to $1/4$ as $\rho\to 1/2$, we also have $\alpha=0$.

\bibliographystyle{plain}
\bibliography{bibliography.bib}

\end{document}